\def\BibTeX{{\rm B\kern-.05em{\sc i\kern-.025em b}\kern-.08em
    T\kern-.1667em\lower.7ex\hbox{E}\kern-.125emX}}
\newtheorem*{theorem*}{Theorem}
\newtheorem*{proof*}{Proof}
\newtheorem*{remark*}{Remark}
\newtheorem{remark}{Remark}
\algnewcommand\algorithmicforeach{\textbf{for each}}
\begin{document}
\title{Sparse Array Design for Dual-Function Radar-Communications System}  %with Main- and Side-lobes Constraints
\author{Huiping Huang, \textit{Student Member, IEEE}, Linlong Wu, \textit{Member, IEEE}, Bhavani Shankar, \textit{Senior Member, IEEE}, and Abdelhak M. Zoubir, \textit{Fellow, IEEE} \vspace{-8mm}
\thanks{The work of H. Huang was supported by the Graduate School CE within the Centre for Computational Engineering at Technische Universität Darmstadt. The work of L. Wu and B. Shankar was supported by FNR through the CORE SPRINGER project under Grant C18/IS/12734677. \emph{(Corresponding author: Linlong Wu)}}
\thanks{H. Huang was with Technische Universität Darmstadt, Darmstadt, Germany; and now is with Chalmers University of Technology, Gothenburg, Sweden (e-mail: huiping@chalmers.se).}
\thanks{L. Wu and B. Shankar are with University of Luxembourg, Luxembourg City, Luxembourg (e-mail: \{linlong.wu; Bhavani.Shankar\}@uni.lu).}
\thanks{A. M. Zoubir is with Technische Universität Darmstadt, Darmstadt, Germany (e-mail: zoubir@spg.tu-darmstadt.de).}
}
%\author{Authors
%\thanks{The work of H. Huang is supported by the Graduate School CE within the Centre for Computational Engineering at Technische Universität Darmstadt.}

\markboth{}%Journal of \LaTeX\ Class Files,~Vol.~XX, No.~X, XX~2022}%
{Sparse Array Design for Dual-function Radar-communications System}

\maketitle

%%%%%%%%%%%% Codes for copyright 2/2 %%%%%%%%%%%%%%%
%\thispagestyle{fancy}
%\fancyhead{}
%\lhead{}
%\lfoot{\fbox{
%		\begin{minipage}{50em}
%			\footnotesize \copyright \textbf{This work will be submitted to the IEEE for possible publication. Copyright may be transferred without notice, after which this version may no longer be accessible. Personal use of this material is permitted. Permission from IEEE must be obtained for all other uses, in any current or future media, including reprinting/republishing this material for advertising or promotional purposes, creating new collective works, for resale or redistribution to servers or lists, or reuse of any copyrighted component of this work in other works.}  % has been submitted to
%		\end{minipage}
%}}
%\cfoot{}
%\rfoot{}

\begin{abstract}
The problem of sparse array design for dual-function radar-communications is investigated. Our goal is to design a sparse array which can simultaneously shape desired beam responses and serve multiple downlink users with the required signal-to-interference-plus-noise ratio levels. Besides, we also take into account the limitation of the radiated power by each antenna. The problem is formulated as a quadratically constrained quadratic program with a joint-sparsity-promoting regularization, which is NP-hard. The resulting problem is solved by the consensus alternating direction method of multipliers, which enjoys parallel implementation. Numerical simulations exhibit the effectiveness
and superiority of the proposed method which leads to a more power-efficient solution.
\end{abstract}

\begin{IEEEkeywords}
Alternating direction method of multipliers, dual-funtion radar-communications, sparse array design
\end{IEEEkeywords}

\section{Introduction}
\label{introduction}
\IEEEPARstart{D}{ual}-function radar-communications (DFRC) systems have been recently widely investigated \cite{Mishra2019, Liu2020, Shi2022, Cheng2022}. They find applications in a wide range of areas, including vehicular networks, indoor positioning and covert communications \cite{Yang2015, Wymeersch2017}. To achieve accurate sensing and high throughput, the DFRC systems probably require a large number of antennas \cite{Chen2017, Elbir2022, Raei2022, Kassaw2019}. In practice, a base station usually equips with less radio-frequency (RF) chains than antennas given the hardware cost consideration. For such a configuration, it raises a question on how to adaptively switch the available RF chains to the corresponding subset of antennas \cite{Hamza2021, Wei2021, Huang2022Sep}, which can be interpreted as sparse array design.    % Blunt2010

In this work, we consider the sparse array design for the DFRC system. Similar problems have been studied in \cite{Wang2022, Xu2022, Zhang2022, Wang2018, Wang2019, Ahmed2020}. The authors in \cite{Wang2022} derived a Cram\'er-Rao bound for the cooperative radar‐communications system, where they focused on target parameter estimation. The work \cite{Xu2022} developed an antenna selection strategy by using a learning approach. This method requires a training process, which might be unavailable in some practical applications. The authors in \cite{Zhang2022} introduced a realistic waveform constraint in the DFRC system, in order to improve the power efficiency. A surrogate subproblem instead of the original problem was solved, which might lead to a highly suboptimal solution. In \cite{Wang2018} and \cite{Wang2019}, several types of DFRC systems were proposed which implement simultaneous beamformers associated with single and different sparse arrays with shared aperture. However, these two works consider a single-user case, and the corresponding methods cannot be applicable to the case of multi-users. A weighted $\ell_{1,q}$-norm optimization based approach was developed in \cite{Ahmed2020}. Note that all the problems considered in \cite{Wang2018, Wang2019, Ahmed2020} are convex and were solved by the existing toolbox such as CVX.

Unlike in previous works, we propose a novel system model for DFRC systems, and formulate the corresponding sparse array design problem as a quadratically constrained quadratic program (QCQP) regularized by a joint-sparsity-promoting term. Besides the control on illumination beampattern and communication signal-to-interference-plus-noise ratio (SINR), we also take into account the limitation of the radiated power by each antenna. We propose an algorithm based on the consensus alternating direction method of multipliers (ADMM) to solve the resulting problem. Note that at each ADMM iteration, the primary variable has a closed-form solution, and the auxiliary variables can be solved efficiently in a parallel manner. Simulation results show its superior performance compared to other examined methods.

{\color{black}It is worth mentioning that the sparse array beamforming can be incorporated into the framework of
hybrid beamforming, via decomposing the sparse array beamformer into the baseband beamformer with a particular selection preference and the RF beamformer. Therefore, it can further improve the power efficiency within the hybrid beamforming.
%It is worth mentioning that compared to the commonly used hybrid beamforming, one major advantage of the sparse array beamforming is its simplicity and power-effeciency in the sense of using less antennas. More importantly, the sparse array beamforming can be incorporated into the framework ofhybrid beamforming, via decomposing the sparse array beamformer into the baseband beamformer and the RF beamformer.
}

The remainder of the paper is organized as follows. The system model is established in Section \ref{SignalModel}. The proposed algorithm is presented in Section \ref{ProposedMethod_Section}. Simulation results are shown in Section \ref{simulation}, while Section \ref{conclusion} concludes the paper.

\textit{Notations}: Throughout this paper, bold-faced lower-case (upper-case) letters denote vectors (matrices). Superscripts $\cdot^{\mathrm{T}}$, $\cdot^{\mathrm{H}}$, $\cdot^{*}$, $\cdot^{-1}$ denote transpose, Hermitian transpose, conjugate, and inverse, respectively. $\mathrm{E}\{\cdot\}$ denotes the expectation operator. $|\cdot|$ and $\angle \cdot$ are the modulus and phase, respectively, both in an element-wise manner. $\mathbb{C}$ and $\mathbb{R}$ are the sets of complex and real numbers, respectively, and $\jmath = \sqrt{-1}$. $\|\cdot\|_{0}$, $\|\cdot\|_{1}$, and $\|\cdot\|_{2}$ are $\ell_{0}$-quasi-norm, $\ell_{1}$-norm, and $\ell_{2}$-norm, respectively. $\otimes$ denotes the Kronecker product. $\oslash$ is the element-wise division operator. ${\bf I}_{N}$ and ${\bf O}_{M \times N}$ are the $N \times N$ identity matrix and $M \times N$ zero matrix, respectively. ${\bf 1}$ and ${\bf 0}$ are the all-ones and all-zeros vector of appropriate size, respectively.  % \vspace{-6mm}

\section{System Model and Problem Formulation}  % \vspace{-2mm}
\label{SignalModel}
We consider a DFRC system, as indicated in Fig. \ref{systemmodel_fig}. The transmitter, consisting of $N$ antennas, emits $M$ signals, each to a single user. The weight vector ${\bf w}_{m} \in \mathbb{C}^{N}$ is designed to transfer the data symbol, $s_{m}(t)$, to user $m$, $\forall m = 1, 2, \cdots \!, M$. All the $M$ transmitted signals are also received by a target. The total transmitted signal can then be formulated as   \vspace{-1mm}
\begin{align}
\label{transmit_signal}
{\bf x}(t)=\sum_{m=1}^{M}{\bf w}_{m}s_{m}(t),
%{\bf x}(t) = [{\bf w}_{1}, {\bf w}_{2}, \cdots \!, {\bf w}_{M}] \!\! \left[ \!\!\!\!
%\begin{array}{c}
%s_{1}(t) \\
%s_{2}(t) \\
%\vdots \\
%s_{M}(t)
%\end{array}
%\!\!\!\! \right],
\end{align}
where $t$ is the time index. We assume that the data symbols, $s_{m}(t)$ $\forall m = 1, 2, \cdots \!, M$, are mutually uncorrelated, and each $s_{m}(t)$ is zero-mean, spatially white with unit variance, i.e.,  \vspace{-1mm}
\begin{subequations}
\label{signal_property}
\begin{align}
\mathrm{E}\{|s_{m}(t)|^{2}\} = & ~\! 1, ~ \forall m \!=\! 1, 2, \cdots \!, M,  \\
\mathrm{E}\{|s_{i}(t) s_{j}^{*}(t)|\} = & ~\! 0, ~ \forall i, j \!=\! 1, 2, \cdots \!, M, \text{~\!and~} i \!\neq\! j.
\end{align} 
\end{subequations}

\begin{figure}[t]
%	\vspace*{-2mm}
	\centerline{\includegraphics[width=0.42\textwidth]{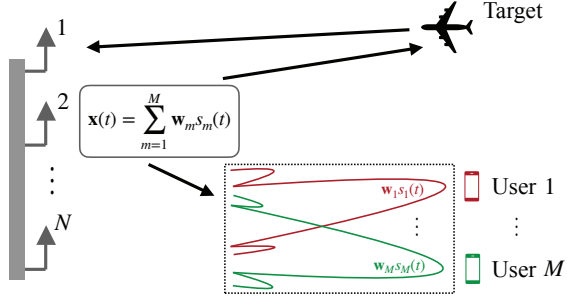}}  
	\caption{Illustration of a DFRC system.}
	\label{systemmodel_fig}  \vspace{-4mm}
\end{figure}

{\color{black}For the radar system, we wish to focus the transmit energy on a certain angular region, say $\mathcal{B}$, in order to have a higher probability of detecting the target(s) lying therein. To this end, we expect that the array response within $\mathcal{B}$ is not less than a preset threshold, while the response outside $\mathcal{B}$ (denoted as $\bar{\mathcal{B}}$) is not higher than a small threshold.} Mathematically,
\begin{subequations}
\label{pass_stop_band}
\begin{align}
\label{passband}
\mathrm{E} \! \left\{ {\bf x}^{\mathrm{H}}(t){\bf a}(\theta_{i}){\bf a}^{\mathrm{H}}(\theta_{i}){\bf x}(t) \right\} & \geq \epsilon_{{\mathrm{p}}}, ~ \forall \theta_{i} \in \mathcal{B}, \\
\label{stopband}
\mathrm{E} \! \left\{ {\bf x}^{\mathrm{H}}(t){\bf a}(\theta_{j}){\bf a}^{\mathrm{H}}(\theta_{j}){\bf x}(t) \right\} & \leq \epsilon_{{\mathrm{s}}}, ~ \forall \theta_{j} \in \bar{\mathcal{B}},
\end{align}
\end{subequations}
where ${\bf a}(\theta)$ denotes the steering vector of ${\theta}$, $\epsilon_{{\mathrm{p}}}$ and $\epsilon_{{\mathrm{s}}}$ are two preset thresholds corresponding to the mainlobe and sidelobe, respectively. Substituting (\ref{transmit_signal}) and (\ref{signal_property}) into (\ref{pass_stop_band}), we have  \vspace{-1mm}
\begin{subequations}
\label{pass_stop_band_w}
\begin{align}
\label{passband_w}
\sum_{m = 1}^{M} {\bf w}_{m}^{\mathrm{H}}{\bf a}(\theta_{i}){\bf a}^{\mathrm{H}}(\theta_{i}){\bf w}_{m} & \geq \epsilon_{{\mathrm{p}}}, ~ \forall \theta_{i} \in \mathcal{B}, \\
\label{stopband_w}
\sum_{m = 1}^{M} {\bf w}_{m}^{\mathrm{H}}{\bf a}(\theta_{j}){\bf a}^{\mathrm{H}}(\theta_{j}){\bf w}_{m} & \leq \epsilon_{{\mathrm{s}}}, ~ \forall \theta_{j} \in \bar{\mathcal{B}}.
\end{align}
\end{subequations}
% \vspace{-2mm}

For the downlink communication systems, the received data by the $m$-th user is given as   \vspace{-2mm}
\begin{align}
y_{m}(t) = \underbrace{ {\bf h}_{m}^{\mathrm{H}}{\bf w}_{m}s_{m}(t) }_{\text{communication signal}} + \underbrace{ \sum_{\substack{j=1, j \neq m}}^{M} \!\! {\bf h}_{m}^{\mathrm{H}}{\bf w}_{j}s_{j}(t)}_{\text{interference signals}} + \underbrace{n_{m}(t)}_{\text{noise}},
\end{align}
where $n_{m}(t)$ is the additive white Gaussian noise with mean zero and known variance $\sigma_{m}^{2}$, $\forall m = 1, 2, \cdots \!, M$. Besides, ${\bf h}_{m} \in \mathbb{C}^{N}$ denotes the channel state information (CSI) vector, which models the propagation loss and phase shift of the frequency-flat quasi-static channel from the transmitter to the $m$-th user. The CSI, $\{{\bf h}_{m}\}_{m}$, is assumed to be perfectly available at the transmitter. The maximum power radiated by each antenna $n$ is given by $P_{n}$. Therefore, we have \cite{Chen2017}  \vspace{-1mm}
\begin{align}
\label{power_antenna_constraint}
\sum_{m = 1}^{M} {\bf w}_{m}^{\mathrm{H}}{\bf E}_{n}{\bf w}_{m} \leq P_{n}, ~ \forall n = 1, 2, \cdots \!, N,
\end{align}
where ${\bf E}_{n}$ is the $N \times N$ all-zero matrix except the $n$-th diagonal entry being $1$. The received SINR of user $m$ is defined below, and a certain minimum SINR needs to be guaranteed, i.e.,  \vspace{-1mm}
\begin{align}
\label{SINR_constraint}
\text{SINR}_{m} \! \triangleq \! \frac{ \left| {\bf h}_{m}^{\mathrm{H}} {\bf w}_{m} \right|^{2}}{  \sum_{j \neq m} \left| {\bf h}_{m}^{\mathrm{H}} {\bf w}_{j} \right|^{2} \! + \! \sigma_{m}^{2} } \! \geq \! \gamma_{m}, ~ \forall m = 1, \cdots \!, M,
\end{align}
where $ \sum_{j \neq m}$ is short notation for $ \sum_{j = 1, j \neq m}^{M}$.

Now we consider the scenario where only $K \leq N$ RF chains are available, and thus only $K$ antennas can be simultaneously utilized in the DFRC system. Therefore, ${\bf w}_{m}$, $\forall m$, are sparse, and moreover, they share the same sparsity pattern. To fulfill this requirement, we need:  \vspace{-1mm}
\begin{align}
\label{sparsity_w}
\|{\widetilde{\bf w}}\|_{0} \leq K,
\end{align}
where ${\widetilde{\bf w}} \triangleq [{\widetilde{w}}_{1}, {\widetilde{w}}_{2}, \cdots \!, {\widetilde{w}}_{N}]^{\mathrm{T}}$ with its component defined as ${\widetilde{w}}_{n} \triangleq \|[{\bf w}_{1}(n), {\bf w}_{2}(n), \cdots \!, {\bf w}_{M}(n)]\|_{2}$ $\forall n = 1, 2, \cdots, N$, and ${\bf w}_{m}(n)$ being the $n$-th entry of vector ${\bf w}_{m}$, $\forall m = 1, 2, \cdots \!, M$.

The quality of service (QoS) problem aims at minimizing the total transmit power (TxPower), with several constraints described above. To this end, the QoS problem is given as   \vspace{-1mm}
\begin{subequations}
\label{prob_L0}
\begin{align}
\label{def_TxPower}
 \min_{ \{{\bf w}_{m}\} } ~ & {\mathrm{TxPower}} \triangleq \sum_{m = 1}^{M} \|{\bf w}_{m}\|_{2}^{2} \\
 \mathrm{s.t.} ~~~\! & (\ref{pass_stop_band_w}), (\ref{power_antenna_constraint}), (\ref{SINR_constraint}), \text{~\!and~} (\ref{sparsity_w}).
\end{align} 
\end{subequations}
By replacing the non-convex $\ell_{0}$-quasi-norm in (\ref{sparsity_w}) with the $\ell_{1}$-norm, and incorporating it into Problem (\ref{prob_L0}) as a regularization term, we have   \vspace{-1mm}
\begin{align}
\label{prob_L1}
 \min_{ \{{\bf w}_{m}\} } ~\!\! \sum_{m = 1}^{M} \!\! \|{\bf w}_{m}\|_{2}^{2} + \eta\|{\widetilde{\bf w}}\|_{1} \quad \mathrm{s.t.} ~ (\ref{pass_stop_band_w}), (\ref{power_antenna_constraint}), \text{~\!and~} (\ref{SINR_constraint}),
\end{align} 
where $\eta$ is a parameter controlling the sparsity of the solution.

\section{Proposed Method}
\label{ProposedMethod_Section}
The difficulties of solving Problem (\ref{prob_L1}) are twofold. First, the discorporated sparsity has a joint structure among all ${\bf w}_{m}$ \cite{Huang2022}, which is different from the classical sparsity and consequently, the existing solutions cannot be directly extended. Second, the constraints (\ref{passband_w}) and (\ref{SINR_constraint}) are nonconvex.

In what follows, we first deal with the joint-sparsity structure of the solution by reformulating the problem, and then, we develop an algorithm based on the consensus ADMM.  \vspace{-2mm}

\subsection{Problem Reformulation}
First of all, we define a long column vector ${\bf w} \in \mathbb{C}^{M \! N}$ as ${\bf w} \triangleq [{\bf w}_{1}^{\mathrm{T}}, {\bf w}_{2}^{\mathrm{T}}, \cdots, {\bf w}_{M}^{\mathrm{T}}]^{\mathrm{T}} $ and define $M$ matrices ${\bf \Phi}_{m} \in \mathbb{R}^{N \times M \! N}$ as ${\bf \Phi}_{m} \triangleq [{\bf O}_{N \times (m-1)N}, {\bf I}_{N}, {\bf O}_{N \times ({M}-m)N}]$. Then, Problem (\ref{prob_L1}) can be reformulated as  \vspace{-1mm}
\begin{subequations}
\label{prob_L21}
\begin{align}
\min_{{\bf w}} ~ & \|{\bf w}\|_{2}^{2} + \eta\|{\bf w}\|_{2,1} \\
\label{w_theta_i}
\mathrm{s.t.} ~~\! & {\bf w}^{\mathrm{H}} {\bf A}_{\theta_{i}} {\bf w} \geq \epsilon_{{\mathrm{p}}}, ~ \forall \theta_{i} \in \mathcal{B}, \\
\label{w_theta_j}
& {\bf w}^{\mathrm{H}} {\bf A}_{\theta_{j}} {\bf w} \leq \epsilon_{{\mathrm{s}}}, ~ \forall \theta_{j} \in \bar{\mathcal{B}}, \\
\label{w_B}
& {\bf w}^{\mathrm{H}} {\bf B}_{n} {\bf w} \leq P_{n}, ~ \forall n = 1, 2, \cdots \!, N, \\
\label{SINR_C}
& \frac{{\bf w}^{\mathrm{H}} {\bf C}_{m} {\bf w} }{ {\bf w}^{\mathrm{H}} {\bf C}_{\bar{m}} {\bf w} + \sigma_{m}^{2} } \geq \gamma_{m}, ~ \forall m = 1, 2, \cdots \!, M,
\end{align}
\end{subequations}
where we define {\color{black}${\bf A}_{\theta_{i}} \triangleq \sum_{m = 1}^{M} \! {\bf \Phi}_{m}^{\mathrm{H}}{\bf a}(\theta_{i}){\bf a}^{\mathrm{H}}(\theta_{i}){\bf \Phi}_{m}$, ${\bf A}_{\theta_{j}} \triangleq \sum_{m = 1}^{M} \! {\bf \Phi}_{m}^{\mathrm{H}}{\bf a}(\theta_{j}){\bf a}^{\mathrm{H}}(\theta_{j}){\bf \Phi}_{m}$, ${\bf B}_{n} \triangleq \sum_{m = 1}^{M} \! {\bf \Phi}_{m}^{\mathrm{H}}{\bf E}_{n}{\bf \Phi}_{m}$, ${\bf C}_{m} \triangleq {\bf \Phi}_{m}^{\mathrm{H}}{\bf h}_{m}{\bf h}_{m}^{\mathrm{H}}{\bf \Phi}_{m}$, ${\bf C}_{\bar{m}} \triangleq \sum_{j \neq m} \!\! {\bf \Phi}_{j}^{\mathrm{H}}{\bf h}_{m}{\bf h}_{m}^{\mathrm{H}}{\bf \Phi}_{j}$,} and we have utilized $ \sum_{m = 1}^{M} \|{\bf w}_{m}\|_{2}^{2} = \|{\bf w}\|_{2}^{2}$, ${\bf w}_{m} = {\bf \Phi}_{m}{\bf w}$, and $\|{\bf w}\|_{2,1} \triangleq \sum_{n = 1}^{N} \sqrt{\sum_{m = 1}^{M}|{\bf w}_{m}(n)|^{2} }$. Further, since the denominators, ${\bf w}^{\mathrm{H}} {\bf C}_{\bar{m}} {\bf w} + \sigma_{m}^{2}$, are always positive, (\ref{SINR_C}) can be written as  \vspace{-1mm}
\begin{align}
\label{w_D}
{\bf w}^{\mathrm{H}}{\bf D}_{m}{\bf w} \geq \gamma_{m}\sigma_{m}^{2}, ~ \forall m = 1, 2, \cdots \!, M,
\end{align}
where ${\bf D}_{m} \triangleq {\bf C}_{m} - \gamma_{m}{\bf C}_{\bar{m}}$,  $\forall m = 1, 2, \cdots \!, M$. 
We observe that (\ref{w_theta_i}), (\ref{w_theta_j}), (\ref{w_B}), and (\ref{w_D}) take the form: ${\bf w}^{\mathrm{H}}{\bf F}{\bf w} \leq f$. Therefore, Problem (\ref{prob_L21}) can be reformulated as
\begin{subequations}
\label{prob_L21_F}
\begin{align}
\min_{{\bf w}} ~ & \|{\bf w}\|_{2}^{2} + \eta\|{\bf w}\|_{2,1} \\
\mathrm{s.t.} ~~\! & {\bf w}^{\mathrm{H}}{\bf F}_{l}{\bf w} \leq f_{l}, ~ \forall l = 1, 2, \cdots \!, L,
\end{align}
\end{subequations}
where the subscript $\cdot_{l}$ is used to indicate the $l$-th constraint in Problem (\ref{prob_L21}), and $L$ is the total number of constraints. It is worth noting that ${\bf F}_{l}$ corresponding to $-{\bf A}_{\theta_{i}}$ is negative semidefinite, and ${\bf F}_{l}$ corresponding to $-{\bf D}_{m}$ could be indefinite. Therefore, in general, Problem (\ref{prob_L21_F}) is non-convex and thus NP-hard \cite{Huang2016}. To solve this problem, we propose an algorithm based on the consensus ADMM, which is capable of handling all the constraints in parallel.

\subsection{Proposed Algorithm}
\label{proposedmethod}

Firstly, we formulate Problem (\ref{prob_L21_F}) by introducing $L$ auxiliary variables $\{{\bf v}_{l} \in \mathbb{C}^{M \! N} \}_{l = 1}^{L}$, and settle the original variable ${\bf w}$ and the auxiliary variables $\{{\bf v}_{l}\}_{l}$ in a separable fashion, as \vspace{-5mm}
\begin{subequations}
\label{QoS_joint_L21_F_wv_problem}
\begin{align}
 \min_{{\bf w}, \{{\bf v}_{l}\}} & ~ \|{\bf w}\|_{2}^{2} + \frac{\eta}{L} \sum_{i = 1}^{L} \|{\bf v}_{l}\|_{2,1} \\
\label{constraint_L21_F}
 \mathrm{s.t.} ~~\! & ~ {\bf v}_{l}^{\mathrm{H}}{\bf F}_{l}{\bf v}_{l} \leq {f}_{l}, ~ {\bf v}_{l} = {\bf w}, ~ \forall l =1,2, \cdots \!, L.
\end{align}
\end{subequations}

Next, we form the scaled-form augmented Lagrangian function related to the above problem, as: $ \mathcal{L}({\bf w}, \{{\bf v}_{l}\}, \{{\bf u}_{l}\}) = \|{\bf w}\|_{2}^{2} + \frac{\eta}{L} \! \sum_{l = 1}^{L} \! \|{\bf v}_{l}\|_{2,1} + \frac{\rho}{2} \! \sum_{l = 1}^{L} \! \left( \|{\bf v}_{l} \! - \! {\bf w} + {\bf u}_{l}\|_{2}^{2} - \|{\bf u}_{l}\|_{2}^{2} \right) $, where $\rho > 0$ stands for the augmented Lagrangian parameter, and ${\bf u}_{l}$ is the scaled dual variable corresponding to the equality constraint ${\bf v}_{l} = {\bf w}$ in Problem (\ref{QoS_joint_L21_F_wv_problem}).

Finally, the consensus ADMM updating equations can be written down as
\begin{subequations}
\label{admm_prob}
\begin{align}
\label{admm_w}
{\bf w} & \gets {\displaystyle{\arg\min_{{\bf w}}}} ~ \|{\bf w}\|_{2}^{2} + \frac{\rho}{2} \sum_{l = 1}^{L} \| {\bf v}_{l} - {\bf w} + {\bf u}_{l} \|_{2}^{2}, \\
\label{admm_v}
{\bf v}_{l} & \gets \left\{ \!\!\!
\begin{array}{l}
{\displaystyle{\arg\min_{{\bf v}_{l}}}} ~ \frac{\eta}{L}\|{\bf v}_{l}\|_{2,1} + \frac{\rho}{2}\|{\bf v}_{l} - {\bf w} + {\bf u}_{l}\|_{2}^{2} \\
~~~ \mathrm{s.t.} ~~~~ {\bf v}_{l}^{\mathrm{H}}{\bf F}_{l}{\bf v}_{l} \leq f_{l},
\end{array} \right. \\
\label{admm_u}
{\bf u}_{l} & \gets {\bf u}_{l} + {\bf v}_{l} - {\bf w}.
\end{align}
\end{subequations}

In what follows, we show how to solve ${\bf w}$ and ${\bf v}_{l}$ from (\ref{admm_w}) and (\ref{admm_v}), respectively. We start by solving ${\bf w}$ from (\ref{admm_w}). It is straightforward to see that, by calculating the derivative of the objective function of (\ref{admm_w}) with respect to (w.r.t.) ${\bf w}$ and setting it to ${\bf 0}$, we obtain the solution to (\ref{admm_w}) as
\begin{align}
{\bf{\widehat{ w}}} = \frac{\rho}{2 + \rho L}\sum_{l = 1}^{L}({\bf v}_{l} + {\bf u}_{l}).
\end{align}
On the other hand, to solve ${\bf v}_{l}$ from (\ref{admm_v}), we firstly consider the unconstrained minimization problem:
\begin{align}
\label{unconstrained_prob}
	\displaystyle\min_{{\bf v}_{l}} ~ f({\bf v}_{l}) \triangleq \frac{\eta}{L}\|{\bf v}_{l}\|_{2,1} + \frac{\rho}{2}\|{\bf v}_{l} - {\bf w} + {\bf u}_{l}\|_{2}^{2}.
\end{align}
The derivative of $f({\bf v}_{l})$ w.r.t. ${\bf v}_{l}$ is calculated as
\begin{align*}
\nabla_{ {\bf v}_{l} } f({\bf v}_{l}) = \left[ \frac{\eta}{L}\left( {\bf I}_{M} \otimes {\bf G} \right) + \rho {\bf I}_{M \! N} \right] \! {\bf v}_{l} - \rho({\bf w} - {\bf u}_{l}),
\end{align*}
where $ {\bf G} \! \in \! \mathbb{R}^{ N \! \times \! N}$ is a diagonal matrix with diagonal being \vspace{-2mm}
\begin{align*}
{\bf 1} \!\! \oslash \!\! \left[ \!\! \sqrt{ \!\! \sum_{m = 1}^{M} \!\! | \! {\bf v}_{l}(1 \! + \! (m \! - \! 1)N) \! |^{2}}, \cdots \! , \! \sqrt{ \!\! \sum_{m = 1}^{M} \!\! | \! {\bf v}_{l}(N \! + \! (m \! - \! 1)N) \! |^{2}} \right]^{\!\! \mathrm{T}} \!\! .
\end{align*} 
By setting the derivative to ${\bf 0}$, we obtain
\begin{align}
\label{v_l_KKT}
 \left[ \frac{\eta}{L}\left( {\bf I}_{M} \otimes {\bf G} \right) + \rho {\bf I}_{M \! N} \right] \! {\bf v}_{l} = \rho({\bf w} - {\bf u}_{l}).
\end{align}
Since $\left[ \frac{\eta}{L}\left( {\bf I}_{M} \! \otimes \! {\bf G} \right) \!+\! \rho {\bf I}_{M \! N} \right]$ is real-valued, $\angle{{\bf v}_{l}} = \angle{({\bf w} \!-\! {\bf u}_{l})}$. Thus, we only need to calculate the modulus of ${\bf v}_{l}$, using
\begin{align}
 \left[ \frac{\eta}{L}\left( {\bf I}_{M} \otimes {\bf G} \right) + \rho {\bf I}_{M \! N} \right] \! |{\bf v}_{l}| = \rho|{\bf w} - {\bf u}_{l}|.
\end{align}
By exploring the structure of ${\bf I}_{M} \! \otimes \! {\bf G}$, we observe that there are $N$ blocks each containing $M$ equal entries. Extracting the rows of the equal entries yields
\begin{align}
\label{v_l_n}
\left( \frac{\eta}{L\| {\bf v}_{l(n)} \|_{2}} + \rho \right) |{\bf v}_{l(n)}| = \rho |{\bf c}_{l(n)}|,
\end{align}
$\forall n = 1, 2, \cdots \!, N$, where ${\bf v}_{l(n)} \triangleq [{\bf v}_{l}(n), {\bf v}_{l}(n + N), \cdots \!, {\bf v}_{l}(n + (M-1)N)]^{\mathrm{T}} \in \mathbb{C}^{M}$, and ${\bf c}_{l(n)} \in \mathbb{C}^{M}$ contains the corresponding entries of $({\bf w} - {\bf u}_{l})$. From (\ref{v_l_n}), we have
\begin{align}
\label{v_l_n_abs}
 |{\bf v}_{l(n)}| = \frac{\rho L \|{\bf v}_{l(n)}\|_{2}}{\eta + \rho L \|{\bf v}_{l(n)}\|_{2}}|{\bf c}_{l(n)}|.
\end{align} 
Performing the element-wise square operation, we have
\begin{align}
 |{\bf v}_{l(n)} \! |^{2} = \frac{\rho^{2} L^{2} \|{\bf v}_{l(n)}\|_{2}^{2}}{\eta^{2} + \rho^{2}L^{2}\|{\bf v}_{l(n)}\|_{2}^{2} + 2 \eta \rho L \|{\bf v}_{l(n)}\|_{2}}|{\bf c}_{l(n)} \! |^{2}.
\end{align} 
Hence, we further have
\begin{subequations}
\begin{align}
\| \! {\bf v}_{l(n)} \! \|_{2}^{2} \! = & {\bf 1}^{\mathrm{T}}|{\bf v}_{l(n)} \! |^{2} \\
= & \frac{\rho^{2} L^{2} \| \! {\bf v}_{l(n)} \! \|_{2}^{2}}{\eta^{2} \!+\! \rho^{2}L^{2}\| \! {\bf v}_{l(n)} \! \|_{2}^{2} \!+\! 2 \eta \rho L \| \! {\bf v}_{l(n)} \! \|_{2}} {\bf 1}^{\mathrm{T}}|{\bf c}_{l(n)} \! |^{2}.
\end{align} 
\end{subequations}
The above equation leads to
\begin{align}
\label{sinpleQF}
\rho^{2}L^{2}\| \! {\bf v}_{l(n)} \! \|_{2}^{2} \!+\! 2 \eta \rho L \| \! {\bf v}_{l(n)} \! \|_{2} \!+\! \eta^{2} \!-\! \rho^{2}L^{2} {\bf 1}^{\mathrm{T}}|{\bf c}_{l(n)} \! |^{2} = 0,
\end{align}
the left-hand side of which is a simple quadratic function w.r.t. $\| \! {\bf v}_{l(n)} \! \|_{2}$, and its unique\footnote{Note that the quadratic function in (\ref{sinpleQF}) has two roots, one positive and one negative. In our case, the negative one is omitted, since its root $\| \! {\bf v}_{l(n)} \! \|_{2} \geq 0$.} root is given as
\begin{align}
\label{root_v_l_n}
\| {\bf v}_{l(n)} \|_{2} = \frac{\rho L \sqrt{{\bf 1}^{\mathrm{T}}|{\bf c}_{l(n)} \! |^{2} } - \eta }{\rho L}.
\end{align}

Then, by substituting (\ref{root_v_l_n}) into (\ref{v_l_n_abs}), we obtain
\begin{align}
\label{v_l_n_modulus}
|{\bf v}_{l(n)}| = \frac{\rho L \sqrt{{\bf 1}^{\mathrm{T}}|{\bf c}_{l(n)}|^{2}} - \eta }{ \rho L \sqrt{{\bf 1}^{\mathrm{T}}|{\bf c}_{l(n)}|^{2}} }|{\bf c}_{l(n)}|.
\end{align}
In (\ref{v_l_n_modulus}), we define $|{\bf v}_{l(n)}| = {\bf 0}$ if $|{\bf c}_{l(n)}| = {\bf 0}$. The solution for (\ref{v_l_KKT}), referred to as ${\bf{\bar v}}_{l}$, is finally obtained by combining ${\bf v}_{l(n)}$ (which can be calculated as $|{\bf v}_{l(n)}| e^{\jmath \angle{\bf v}_{l(n)}}$). Then, the solution to (\ref{admm_v}), denoted by ${\bf{\widehat v}}_{l}$, is found via the following theorem.

\begin{theorem*}
\label{v_closest}
If $\rho$ satisfies $\frac{\rho}{2} \gg \frac{\eta}{L}$, then ${\bf{\widehat v}}_{l}$ can be solved via:
\begin{align}
\label{v_l_hat}
{\bf{\widehat v}}_{l} \gets \arg\min_{{\bf v}_{l}} ~ \|{\bf v}_{l} - {\bf{\bar v}}_{l}\|_{2}^{2} \quad \mathrm{s.t.} ~ {\bf{ v}}_{l}^{\mathrm{H}}{\bf F}_{l}{\bf{ v}}_{l} \leq f_{l}.
\end{align}
\end{theorem*}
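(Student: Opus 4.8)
\emph{Proof strategy.} The plan is to show that, in the regime $\frac{\rho}{2} \gg \frac{\eta}{L}$, the objective of the constrained subproblem (\ref{admm_v}) agrees, up to a correction that is negligible against the quadratic $\frac{\rho}{2}\|\cdot\|_{2}^{2}$, with a quadratic centered at the unconstrained minimizer ${\bf{\bar v}}_{l}$ of (\ref{unconstrained_prob}). Consequently, minimizing it over the feasible set $\mathcal{F}_{l} \triangleq \{ {\bf v}_{l} : {\bf v}_{l}^{\mathrm{H}}{\bf F}_{l}{\bf v}_{l} \le f_{l} \}$ collapses to the Euclidean projection of ${\bf{\bar v}}_{l}$ onto $\mathcal{F}_{l}$, which is exactly (\ref{v_l_hat}).

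Concretely, I would first split the objective as $f({\bf v}_{l}) = g({\bf v}_{l}) + q({\bf v}_{l})$ with $g({\bf v}_{l}) \triangleq \frac{\eta}{L}\|{\bf v}_{l}\|_{2,1}$ and $q({\bf v}_{l}) \triangleq \frac{\rho}{2}\|{\bf v}_{l} - {\bf w} + {\bf u}_{l}\|_{2}^{2}$, and recall that ${\bf{\bar v}}_{l}$, being the solution of (\ref{v_l_KKT}), is a stationary point of $f$, i.e., $\nabla g({\bf{\bar v}}_{l}) + \nabla q({\bf{\bar v}}_{l}) = {\bf 0}$, where $\nabla g({\bf{\bar v}}_{l}) = \frac{\eta}{L}({\bf I}_{M} \otimes {\bf G}){\bf{\bar v}}_{l}$ (read as a subgradient on any zero block of ${\bf{\bar v}}_{l}$). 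Then I would expand both pieces about ${\bf{\bar v}}_{l}$: since $q$ is quadratic, $q({\bf v}_{l}) = q({\bf{\bar v}}_{l}) + \nabla q({\bf{\bar v}}_{l})^{\mathrm{H}}({\bf v}_{l} - {\bf{\bar v}}_{l}) + \frac{\rho}{2}\|{\bf v}_{l} - {\bf{\bar v}}_{l}\|_{2}^{2}$ exactly, while for the convex block-$\ell_{2}$ term $g({\bf v}_{l}) = g({\bf{\bar v}}_{l}) + \nabla g({\bf{\bar v}}_{l})^{\mathrm{H}}({\bf v}_{l} - {\bf{\bar v}}_{l}) + r({\bf v}_{l})$ with a nonnegative first-order remainder $r({\bf v}_{l})$ satisfying $r({\bf v}_{l}) \le \frac{\eta}{L}\kappa\|{\bf v}_{l} - {\bf{\bar v}}_{l}\|_{2}^{2}$, the constant $\kappa$ being governed by $\max_{n} 1/\|{\bf{\bar v}}_{l(n)}\|_{2}$, the bound on the Hessian of the smooth part of $\|\cdot\|_{2,1}$ (zero blocks, where this degrades to the Lipschitz estimate $\frac{\eta}{L}\|{\bf v}_{l(n)}\|_{2}$, are handled separately). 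Adding the two expansions, the linear terms cancel by stationarity, leaving
\[
f({\bf v}_{l}) = f({\bf{\bar v}}_{l}) + \frac{\rho}{2}\|{\bf v}_{l} - {\bf{\bar v}}_{l}\|_{2}^{2} + r({\bf v}_{l}).
\]

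Since $0 \le r({\bf v}_{l}) \le \frac{\eta}{L}\kappa\|{\bf v}_{l} - {\bf{\bar v}}_{l}\|_{2}^{2}$ and $\frac{\rho}{2} \gg \frac{\eta}{L}$, the remainder is dominated by $\frac{\rho}{2}\|{\bf v}_{l} - {\bf{\bar v}}_{l}\|_{2}^{2}$, so minimizing $f({\bf v}_{l})$ over $\mathcal{F}_{l}$ is, to leading order, the same as minimizing $\|{\bf v}_{l} - {\bf{\bar v}}_{l}\|_{2}^{2}$ over $\mathcal{F}_{l}$, i.e., (\ref{v_l_hat}). This also explains why the center must be ${\bf{\bar v}}_{l}$ rather than ${\bf w} - {\bf u}_{l}$: it is precisely the identity $\nabla g({\bf{\bar v}}_{l}) = -\nabla q({\bf{\bar v}}_{l})$ that kills the $O(\eta/L)$ first-order term that centering at ${\bf w} - {\bf u}_{l}$ would otherwise leave.

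I expect the main obstacle to be turning the phrase ``to leading order'' into a quantitative statement. Because ${\bf F}_{l}$ may be indefinite, $\mathcal{F}_{l}$ is generally non-convex, so one cannot appeal to a standard projection-stability lemma to bound the distance between $\arg\min_{{\bf v}_{l} \in \mathcal{F}_{l}} f({\bf v}_{l})$ and the projection of ${\bf{\bar v}}_{l}$; instead one must weigh the strong-convexity modulus $\rho$ of the quadratic against the effective slope $\frac{\eta}{L}\kappa$ of $r$ and show the resulting discrepancy is $O(\eta/(\rho L))$, which vanishes in the stated regime. A lesser technical point is the non-differentiability of $\|\cdot\|_{2,1}$ at zero blocks, which drops the second-order remainder there to first order but only at the same $O(\eta/L)$ scale and hence does not alter the conclusion.
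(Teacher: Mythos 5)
Your proposal is correct at the level of rigor the statement admits (the theorem is itself an asymptotic claim, conditioned on $\rho/2 \gg \eta/L$), but it takes a genuinely different route from the paper. The paper's proof fixes $\rho = C\eta/L$ and lets $C \to \infty$: in that limit the $\ell_{2,1}$ term is dropped outright, the shrinkage in (\ref{v_l_n_modulus}) disappears so that ${\bf{\bar v}}_{l} = {\bf w} - {\bf u}_{l}$, hence $f({\bf v}_{l}) = \frac{\rho}{2}\|{\bf v}_{l} - {\bf{\bar v}}_{l}\|_{2}^{2}$, and the conclusion follows by a short contradiction argument against the closest-point property of the projection. You instead keep the $\ell_{2,1}$ term, use the stationarity of ${\bf{\bar v}}_{l}$ in (\ref{v_l_KKT}) to cancel the first-order terms, and write $f({\bf v}_{l}) = f({\bf{\bar v}}_{l}) + \frac{\rho}{2}\|{\bf v}_{l} - {\bf{\bar v}}_{l}\|_{2}^{2} + r({\bf v}_{l})$ with a nonnegative Bregman-type remainder bounded by $O(\eta/L)$ terms. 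What your version buys is a finite-$\rho$, quantitative picture: it shows the discrepancy between the true subproblem and the projection is second order near ${\bf{\bar v}}_{l}$ (first order only on thresholded blocks, at scale $O(\eta/(\rho L))$), and it explains intrinsically why the projection center is ${\bf{\bar v}}_{l}$ rather than ${\bf w} - {\bf u}_{l}$ — a distinction the paper's limit erases. What the paper's route buys is brevity. The obstacles you flag honestly — the blow-up of $\kappa$ on nearly-vanishing blocks and, more seriously, the lack of an argmin-stability lemma over the non-convex set $\{{\bf v}_{l} : {\bf v}_{l}^{\mathrm{H}}{\bf F}_{l}{\bf v}_{l} \leq f_{l}\}$ — are equally unaddressed by the paper's own proof (its claim $|g - f| \to 0$ and the final contradiction likewise hold only in the limiting sense), so your argument is no less rigorous than the original, and arguably more informative.
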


\begin{proof}
See Appendix \ref{proof_v_closest}.
\end{proof} 

\begin{remark}
Note that $\eta$ is related to the sparsity of the solution of Problem (\ref{prob_L1}), and $L$ is the total number of constraints in Problem (\ref{prob_L21_F}). Both of them are known for a specific problem. Hence, it is easy to choose a $\rho$ such that $\frac{\rho}{2} \gg \frac{\eta}{L}$.
\end{remark} % \vspace{-4mm}

\begin{remark}
If ${\bf{\bar v}}_{l}$ satisfies the constraint of (\ref{v_l_hat}), i.e., ${\bf{\bar v}}_{l}^{\mathrm{H}}{\bf F}_{l}{\bf{\bar v}}_{l} \leq f_{l}$, it is easy to have ${\bf{\widehat v}}_{l} = {\bf{\bar v}}_{l}$. Otherwise, notice that Problem (\ref{v_l_hat}) is a QCQP with one constraint, which can be solved optimally despite that ${\bf F}_{l}$ may be indefinite \cite{Huang2016}.
\end{remark}   % \vspace{-2mm}

So far, we have presented how to solve ${\bf w}$ and ${\bf v}_{l}$ from (\ref{admm_w}) and (\ref{admm_v}), respectively. Note that $\{{\bf v}_{l}\}_{l = 1}^{L}$ and $\{{\bf u}_{l}\}_{l = 1}^{L}$ can be calculated in parallel. The complete consensus ADMM for solving Problem (\ref{prob_L21_F}) is summarized in Algorithm \ref{admm_alg}, in which $k_{\mathrm{max}}$ is used to terminate the iteration, and the superscript $\cdot^{(k)}$ denotes the corresponding variable at the $k$-th iteration. \vspace{-3mm}

\begin{algorithm}[t]
	\caption{Consensus ADMM for solving Problem (\ref{prob_L21_F})}
	\label{admm_alg}
	\textbf{Input:} $\eta$, $\rho$, $k_{\mathrm{max}}$, ${\bf F}_{l} \! \in \! \mathbb{C}^{M \! N \times M \! N}$\!, ${f}_{l}$, $\forall l = 1,2, \cdots \!, L$  \\
	\textbf{Output:} ${\bf{\widehat w}} \in \mathbb{C}^{M \! N}$ \\
	\textbf{Initialize:} ${\bf{\widehat{v}}}_{l}^{(0)} \! \gets \! {\bf v}_{l}^{\text{(init)}}$, ${\bf{\widehat{ u}}}_{l}^{(0)} \! \gets \! {\bf u}_{l}^{\text{(init)}}$, $k \! \gets \! 0$
	\begin{algorithmic}[1]
		\While {$k < k_{\mathrm{max}}$} 
		 \State ${\bf{\widehat{ w}}}^{(k + 1)} \gets \frac{\rho}{2 + \rho L} {\displaystyle{\sum_{l = 1}^{L}}} \! \left( {\bf{\widehat{v}}}_{l}^{(k)} + {\bf{\widehat{ u}}}_{l}^{(k)} \right)$
		 \ForEach {$l = 1, 2, \cdots \!, L$}  \tikzmarknode{v0}{}
		 \State ${\bf c}_{l} \gets {\bf{\widehat w}}^{(k + 1)} - {\bf{\widehat u}}_{l}^{(k)}$
		 \State $\angle {\bf{\bar v}}_{l}^{(k + 1)} \gets \angle {\bf c}_{l}$
		 	\ForEach {$n = 1, 2, \cdots \!, N$}   \tikzmarknode{v0_inner}{}
			\State $|{\bf{\bar v}}_{l(n)}^{(k + 1)}| \gets  \frac{\rho L \sqrt{{\bf 1}^{\mathrm{T}}|{\bf c}_{l(n)}|^{2}} - \eta }{ \rho L \sqrt{{\bf 1}^{\mathrm{T}}|{\bf c}_{l(n)}|^{2}} }|{\bf c}_{l(n)}|$  \!\!\!\! \tikzmarknode{dorthin_inner}{}
			\State ${\bf{\bar v}}_{l(n)}^{(k + 1)} \gets |{\bf{\bar v}}_{l(n)}^{(k + 1)}| e^{\jmath \angle{\bf{\bar v}}_{l(n)}^{(k + 1)}}$    \tikzmarknode{S_inner}{}
			\EndFor
		 \State Construct ${\bf{\bar v}}_{l}^{(k + 1)}$ using ${\bf{\bar v}}_{l(n)}^{(k + 1)}$, $\forall n$ 
		 \If {${\bf{\bar v}}_{l}^{(k + 1)\mathrm{H}}{\bf F}_{l}{\bf{\bar v}}_{l}^{(k + 1)} \! \leq \! f_{l}$} {${\bf{\widehat v}}_{l}^{(k + 1)} \!\! \gets \! {\bf{\bar v}}_{l}^{(k + 1)}$} ~\!\!\!\!\!\!\! \tikzmarknode{dorthin}{}
%		 \Else $~{\bf{\widehat v}}_{l}^{(k + 1)} \gets ({\bf I}_{M \! N} + \lambda{\bf F}_{l})^{-1}{\bf{\bar{v}}}_{l}^{(k + 1)}$ 
		 \Else $~{\bf{\widehat v}}_{l}^{(k + 1)} \gets \left\{ \!\! \begin{array}{l}
		 \displaystyle \arg\min_{{\bf v}_{l}} ~ \| {\bf v}_{l} - {\bf{\bar v}}_{l}^{(k + 1)}  \|_{2}^{2} \\
		  ~~~~ \mathrm{s.t.} \quad ~ {\bf{ v}}_{l}^{\mathrm{H}}{\bf F}_{l}{\bf{ v}}_{l} \leq f_{l} 
		  \end{array}
		  \right. $
		 \EndIf 
		 \State ${\bf{\widehat{ u}}}_{l}^{(k + 1)} \gets {\bf{\widehat{ u}}}_{l}^{(k)} + {\bf{\widehat v}}_{l}^{(k + 1)} - {\bf{\widehat w}}^{(k + 1)}$      \tikzmarknode{S}{}
		  \EndFor
%		 \ForEach  {$l = 1, 2, \cdots \!, L$} \tikzmarknode{v0_inner_u}{}
%		 \State ${\bf{\widehat{ u}}}_{l}^{(k + 1)} \gets {\bf{\widehat{ u}}}_{l}^{(k)} + {\bf{\widehat v}}_{l}^{(k + 1)} - {\bf{\widehat w}}^{(k + 1)}$   \tikzmarknode{dorthin_inner_u}{}   \tikzmarknode{S_inner_u}{}
%		 \EndFor  
		\State $k \gets k+1$
		\EndWhile
		\State ${\bf{\widehat w}} \gets {\bf{\widehat w}}^{(k)}$ 
	\end{algorithmic} 
\end{algorithm}      % \vspace{-4mm}
\begin{tikzpicture}[overlay,remember picture]
 \draw[thick,decorate,decoration=brace] ([xshift=1mm,yshift=2mm]dorthin.east |- v0.south) 
 -- ([xshift=1mm,yshift=-2mm]dorthin.east |- S.south)
 node[midway,right = 1mm,align=left]{\begin{turn}{270}In Parallel \end{turn}};
\end{tikzpicture}
\begin{tikzpicture}[overlay,remember picture]
 \draw[thick,decorate,decoration=brace] ([xshift=1mm,yshift=2mm]dorthin_inner.east |- v0_inner.south) 
 -- ([xshift=1mm,yshift=-2mm]dorthin_inner.east |- S_inner.south)
 node[midway,right = 1mm,align=left]{\begin{turn}{270}In Parallel \end{turn}};
\end{tikzpicture}
%\begin{tikzpicture}[overlay,remember picture]
% \draw[thick,decorate,decoration=brace] ([xshift=1mm,yshift=2mm]dorthin_inner_u.east |- v0_inner_u.south) 
% -- ([xshift=1mm,yshift=-2mm]dorthin_inner_u.east |- S_inner_u.south)
% node[midway,right = 1mm,align=left]{\begin{turn}{270}In Parallel \end{turn}};
%\end{tikzpicture}

\section{Simulation Results}
\label{simulation}
In this section, we evaluate the performance of the proposed algorithm compared with the feasible point pursuit successive convex approximation (FPP-SCA) method \cite{Mehanna2015}. FPP-SCA was proposed for general QCQP and it is adapted to our problem. Note that unlike our solution, no closed-form solution of FPP-SCA is given. Two metrics are adopted: the TxPower defined in (\ref{def_TxPower}) and the {\color{black}radar-side} mainlobe-to-sidelobe response ratio (MSRR) defined as % \vspace{-1mm}
$
\mathrm{MSRR} = \frac{  \sum_{\theta_{i} \in \mathcal{B}} \sum_{m = 1}^{M} \! \left| {\bf w}_{m}^{\mathrm{H}}{\bf a}(\theta_{i}) \right|^{2} }{ \sum_{\theta_{j} \in \bar{\mathcal{B}}} \sum_{m = 1}^{M} \! \left|{\bf w}_{m}^{\mathrm{H}}{\bf a}(\theta_{j}) \right|^{2}}
$.

We first consider a transmit system with a uniform linear array of $N = 10$ antennas and $K = 8$ or $10$ RF chains. {\color{black}There are $M = 2$ users located at $-45^{\circ}$ and $45^{\circ}$.} The mainlobe and sidelobe are $\mathcal{B} = [-5^{\circ} , 5^{\circ}]$ and {\color{black}$\bar{\mathcal{B}} = [-90^{\circ} , -60^{\circ}] \cup [-30^{\circ} , -20^{\circ}] \cup [20^{\circ} , 30^{\circ}] \cup [60^{\circ} , 90^{\circ}]$}. The thresholds for the mainlobe and sidelobe response are $\epsilon_{{\mathrm{p}}} = 10$ and {\color{black}$\epsilon_{{\mathrm{s}}} = 0.5$}. The maximum power radiated by each antenna is $P_{n} = 40 ~ \text{dBm}$, $\forall n$, the same as \cite{Chen2017}. The noise variance is set to $\sigma_{m}^{2} = 1$, $\forall m$. The threshold for the received SINR is $\gamma_{m} = 10 ~ \text{dB}$, $\forall m$. The number of constraints in Problem (\ref{prob_L21_F}) is {\color{black}$L = 38$}, the tuning parameter is\footnote{We do not study the relationship between $\eta$ and the sparsity of the solution, because of the space limitation of the paper. Instead, when we obtain ${\bf{\widehat{w}}}$, we choose $K$ antennas corresponding to the largest (in an $\ell_{2,1}$-norm sense) $K$ components. We have good results when the tuning parameter is $\eta = 0.1$.} $\eta = 0.1$, and the augmented Lagrangian parameter is $\rho = 50$ ($\rho/2 \gg \eta/L$ is satisfied). The value of ${\bf v}_{l}^{\text{(init)}}$ is given by any feasible point, while ${\bf u}_{l}^{\text{(init)}} = {\bf 0}$ and $k_{\mathrm{max}} = 100$. {\color{black}The beampatterns are drawn in Fig. \ref{Beampattern}, which indicates that both the proposed and FPP-SCA methods have beamlobes in the radar mainlobe $\mathcal{B}$ and the users directions. In addition, the proposed method has much higher response within $\mathcal{B}$ than the FPP-SCA method.}

\begin{figure}[t]
	\vspace*{-2mm}
	\centerline{\includegraphics[width=0.5\textwidth]{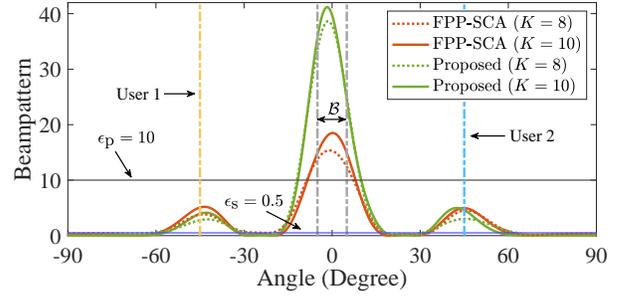}} 
	\caption{{\color{black}Beampattern comparison.}}
	\label{Beampattern}
\end{figure}

Next, we examine the TxPower and MSRR versus the number of selected sensors, i.e., $K$. We also consider a strategy of randomly selecting $K$ sensors. The parameters are the same as those in the last example. $500$ Monte-Carlo trials are performed. The results are plotted in Fig. \ref{vsNumSelectedSensor}. It is seen that our proposed method has the lowest transmit power and the highest MSRR, among all tested methods.

\begin{figure}[t]
	\vspace*{-2mm}
	\centerline{\includegraphics[width=0.5\textwidth]{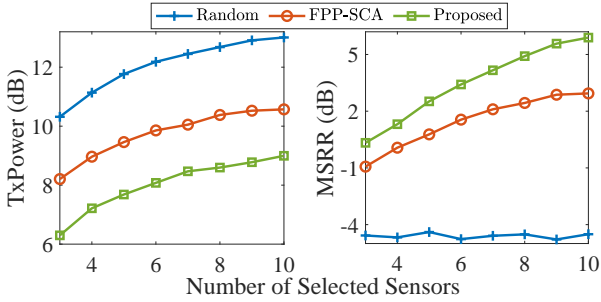}}  
	\caption{TxPower (left) and MSRR (right) versus $K$.}
	\label{vsNumSelectedSensor}
\end{figure}

Finally, we test the TxPower and MSRR versus the number of users, i.e., $M$. The number of selected antennas is fixed as $K = 8$, and the other parameters are unchanged as in the last example. The results are depicted in Fig. \ref{vsNumUser}, which again show that our algorithm leads to a more power-efficient solution.

\begin{figure}[t]
	\vspace*{-2mm}
	\centerline{\includegraphics[width=0.5\textwidth]{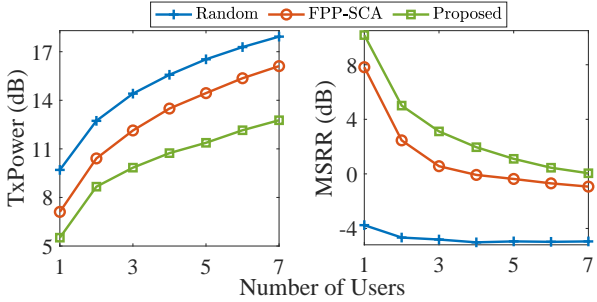}}  
	\caption{TxPower (left) and MSRR (right) versus $M$.}
	\label{vsNumUser}
\end{figure}

\section{Conclusion}
\label{conclusion}
We studied the problem of sparse array design for dual-function radar-communications. Our design aimed at maintaining good control in both mainlobe and sidelobe, and also to keep the signal-to-interference-plus-noise ratio for the users larger than a certain level. In addition, we considered the limitation of the radiated power by each antenna. The problem was formulated as a quadratically constrained quadratic program, and solved by the consensus alternating direction method of multipliers. The proposed algorithm was able to be implemented in parallel. Simulation results demonstrated better performance of the proposed algorithm than the other tested methods.

\appendices
\section{Proof of Theorem}    %  \ref{v_closest}  
\label{proof_v_closest}
Solving (\ref{v_l_hat}) is equal to finding a point in $\{\! {\bf v}_{l} \!:\! {\bf{ v}}_{l}^{\mathrm{H}}{\bf F}_{l}{\bf{ v}}_{l} \! \leq \! f_{l} \!\}$, such that it is closest (in an $\ell_{2}$-norm sense) to ${\bf{\bar v}}_{l}$. Hence, the theorem equivalently states that the solution to Problem (\ref{admm_v}) is the point closest (in an $\ell_{2}$-norm sense) to ${\bf{\bar v}}_{l}$, provided that ${\rho}/{2} \gg {\eta}/{L}$. To show this, we denote ${\bf{\widetilde{v}}}_{l}$ as the point in $\{{\bf v}_{l} : {\bf{ v}}_{l}^{\mathrm{H}}{\bf F}_{l}{\bf{ v}}_{l} \leq f_{l} \}$, such that 
\begin{align}
\label{v_widetilde}
\|{\bf{\widetilde{v}}}_{l} - {\bf{\bar{v}}}_{l}\|_{2} \leq \|{\bf v}_{l} - {\bf{\bar{v}}}_{l}\|_{2}
\end{align}
holds for any ${\bf v}_{l} \in \{{\bf v}_{l} : {\bf{ v}}_{l}^{\mathrm{H}}{\bf F}_{l}{\bf{ v}}_{l} \leq f_{l} \}$. Our goal is to show $f({\bf{\widetilde{v}}}_{l}) \leq f({\bf v}_{l})$, for any ${\bf v}_{l} \in \{{\bf v}_{l} : {\bf{ v}}_{l}^{\mathrm{H}}{\bf F}_{l}{\bf{ v}}_{l} \leq f_{l} \}$.

The Lagrangian parameter $\rho$ is chosen as $\rho = C \eta/L$, where $C$ is a constant. Then, $| g({\bf v}_{l}) - f({\bf v}_{l}) | \to 0$ as $C \to \infty$ (i.e., $\rho/2 \gg \eta/L$), where $ g({\bf v}_{l}) \triangleq \frac{\rho}{2}\|{\bf v}_{l} - {\bf w} + {\bf u}_{l}\|_{2}^{2}$. Moreover, 
\begin{align}
\label{f_v_approx}
f({\bf v}_{l}) = g({\bf v}_{l}) = \frac{\rho}{2}\|{\bf v}_{l} - {\bf{\bar{v}}}_{l}\|_{2}^{2},
\end{align}
as long as $C \to \infty$. Note that, in the second equality above, we used the fact that ${\bf{\bar{v}}}_{l} = {\bf w} - {\bf u}_{l}$ as $C \to \infty$. 

Suppose that there exists a point ${\bf{\breve v}}_{l} \in \{{\bf v}_{l} : {\bf{ v}}_{l}^{\mathrm{H}}{\bf F}_{l}{\bf{ v}}_{l} \leq f_{l} \}$, such that $f({\bf{\breve v}}_{l}) < f({\bf{\widetilde{v}}}_{l})$. Thus, by using (\ref{f_v_approx}), we obtain that $\frac{\rho}{2}\|{\bf{\breve v}}_{l} - {\bf{\bar{v}}}_{l}\|_{2}^{2} < \frac{\rho}{2}\|{\bf{\widetilde{v}}}_{l} - {\bf{\bar{v}}}_{l}\|_{2}^{2}$, which contradicts (\ref{v_widetilde}). This implies that $f({\bf{\widetilde{v}}}_{l}) \leq f({\bf v}_{l})$ holds for all feasible ${\bf v}_{l}$, that is, ${\bf{\widetilde{v}}}_{l}$ is the solution to Problem (\ref{admm_v}). This completes the proof.

%\section*{Acknowledgment}
%
%The preferred spelling of the word ``acknowledgment'' in America is without 
%an ``e'' after the ``g''. Avoid the stilted expression ``one of us (R. B. 
%G.) thanks $\ldots$''. Instead, try ``R. B. G. thanks$\ldots$''. Put sponsor 
%acknowledgments in the unnumbered footnote on the first page.

%\balance
\bibliographystyle{myIEEEtran}       %   myIEEEtran  % myIEEEtran.bst: do not use "----" to replace repeated author names
\bibliography{refs}

% Generated by IEEEtran.bst, version: 1.14 (2015/08/26)
\begin{thebibliography}{10}
\providecommand{\url}[1]{#1}
\csname url@samestyle\endcsname
\providecommand{\newblock}{\relax}
\providecommand{\bibinfo}[2]{#2}
\providecommand{\BIBentrySTDinterwordspacing}{\spaceskip=0pt\relax}
\providecommand{\BIBentryALTinterwordstretchfactor}{4}
\providecommand{\BIBentryALTinterwordspacing}{\spaceskip=\fontdimen2\font plus
\BIBentryALTinterwordstretchfactor\fontdimen3\font minus
  \fontdimen4\font\relax}
\providecommand{\BIBforeignlanguage}[2]{{%
\expandafter\ifx\csname l@#1\endcsname\relax
\typeout{** WARNING: IEEEtran.bst: No hyphenation pattern has been}%
\typeout{** loaded for the language `#1'. Using the pattern for}%
\typeout{** the default language instead.}%
\else
\language=\csname l@#1\endcsname
\fi
#2}}
\providecommand{\BIBdecl}{\relax}
\BIBdecl

\bibitem{Mishra2019}
K.~V. Mishra, B.~Shankar, V.~Koivunen, B.~Ottersten, and S.~A. Vorobyov,
  ``Toward millimeter-wave joint radar communications: {A} signal processing
  perspective,'' \emph{IEEE Signal Process. Mag.}, vol.~36, no.~5, pp.
  100--114, Sep. 2019.

\bibitem{Liu2020}
F.~Liu, C.~Masouros, A.~P. Petropulu, H.~Griffiths, and L.~Hanzo, ``Joint radar
  and communication design: {A}pplications, state-of-the-art, and the road
  ahead,'' \emph{IEEE Trans. Commun.}, vol.~68, no.~6, pp. 3834--3862, Jun.
  2020.

\bibitem{Shi2022}
S.~Shi, Z.~Cheng, L.~Wu, Z.~He, and B.~Shankar, ``Distributed 5{G} {NR}-based
  integrated sensing and communication systems: {F}rame structure and
  performance analysis,'' in \emph{Proc. Eur. Signal Process. Conf. (EUSIPCO)},
  Belgrade, Serbia, Aug. 2022, pp. 1062--1066.

\bibitem{Cheng2022}
\BIBentryALTinterwordspacing
Z.~Cheng, L.~Wu, B.~Wang, B.~Shankar, B.~Liao, and B.~Ottersten, ``Hybrid
  beamforming in mm{W}ave dual-function radar-communication systems: {M}odels,
  technologies, and challenges,'' Oct. 2022. [Online]. Available:
  \url{https://arxiv.org/abs/2209.04656}
\BIBentrySTDinterwordspacing

\bibitem{Yang2015}
C.~Yang and H.-R. Shao, ``Wi{F}i-based indoor positioning,'' \emph{IEEE Commun.
  Mag.}, vol.~53, no.~3, pp. 150--157, Mar. 2015.

\bibitem{Wymeersch2017}
H.~Wymeersch, G.~Seco-Granados, G.~Destino, D.~Dardari, and F.~Tufvesson,
  ``5{G} mm{W}ave positioning for vehicular networks,'' \emph{IEEE Wirel.
  Commun.}, vol.~24, no.~6, pp. 80--86, Dec. 2017.

\bibitem{Chen2017}
E.~Chen and M.~Tao, ``{ADMM}-based fast algorithm for multi-group multicast
  beamforming in large-scale wireless systems,'' \emph{IEEE Trans. Commun.},
  vol.~65, no.~6, pp. 2685--2698, Jun. 2017.

\bibitem{Elbir2022}
A.~M. Elbir, K.~V. Mishra, B.~Shankar, and B.~Ottersten, ``A family of deep
  learning architectures for channel estimation and hybrid beamforming in
  multi-carrier mm-{W}ave massive {MIMO},'' \emph{IEEE Trans. Cogn. Commun.
  Netw.}, vol.~8, no.~2, pp. 642--656, Jun. 2022.

\bibitem{Raei2022}
E.~Raei, S.~Sedighi, M.~Alaee-Kerahroodi, and B.~Shankar, ``{MIMO} radar
  transmit beampattern shaping for spectrally dense environments,'' \emph{IEEE
  Trans. Aerosp. Electron. Syst.}, pp. 1--13, Aug. 2022.

\bibitem{Kassaw2019}
A.~Kassaw, D.~Hailemariam, M.~Fauß, and A.~M. Zoubir, ``Fractional programming
  for energy efficient power control in uplink massive {MIMO} systems,'' in
  \emph{Proc. Eur. Signal Process. Conf. (EUSIPCO)}, A Coruna, Spain, Sep.
  2019, pp. 1--5.

\bibitem{Hamza2021}
S.~A. Hamza and M.~G. Amin, ``Sparse array beamforming design for wideband
  signal models,'' \emph{IEEE Trans. Aerosp. Electron. Syst.}, vol.~57, no.~2,
  pp. 1211--1226, Apr. 2021.

\bibitem{Wei2021}
T.~Wei, L.~Wu, and B.~Shankar, ``Sparse array beampattern synthesis via
  majorization-based {ADMM},'' in \emph{Proc. IEEE Veh. Technol. Conf.
  (VTC2021-Fall)}, Norman, USA, Sep. 2021, pp. 1--5.

\bibitem{Huang2022Sep}
\BIBentryALTinterwordspacing
H.~Huang, H.~C. So, and A.~M. Zoubir, ``Sparse array beamformer design via
  {ADMM},'' Sep. 2022. [Online]. Available:
  \url{https://arxiv.org/abs/2208.12313}
\BIBentrySTDinterwordspacing

\bibitem{Wang2022}
L.~Wang, Q.~He, and H.~Li, ``Transmitter selection and receiver placement for
  target parameter estimation in cooperative radar-communications system,''
  \emph{IET Signal Process.}, vol.~16, no.~7, pp. 776--787, Feb. 2022.

\bibitem{Xu2022}
Z.~Xu, F.~Liu, and A.~P. Petropulu, ``Cramér-{R}ao bound and antenna selection
  optimization for dual radar-communication design,'' in \emph{Proc. IEEE Int.
  Conf. Acoust. Speech Signal Process. (ICASSP)}, Singapore, Singapore, May
  2022, pp. 5168--5172.

\bibitem{Zhang2022}
X.~Zhang, X.~Wang, and X.~Wang, ``Joint antenna selection and waveform design
  for coexistence of {MIMO} radar and communications,'' \emph{EURASIP J. Adv.
  Signal Process.}, pp. 1--27, Dec. 2022.

\bibitem{Wang2018}
X.~Wang, A.~Hassanien, and M.~G. Amin, ``Sparse transmit array design for
  dual-function radar communications by antenna selection,'' \emph{Digit.
  Signal Process.}, vol.~83, pp. 223--234, Dec. 2018.

\bibitem{Wang2019}
X.~Wang, A.~Hassanien, and M.~G. Amin, ``Dual-function {MIMO} radar
  communications system design via sparse array optimization,'' \emph{IEEE
  Trans. Aerosp. Electron. Syst.}, vol.~55, no.~3, pp. 1213--1226, Jun. 2019.

\bibitem{Ahmed2020}
A.~Ahmed, S.~Zhang, and Y.~D. Zhang, ``Antenna selection strategy for transmit
  beamforming-based joint radar-communication system,'' \emph{Digit. Signal
  Process.}, vol. 105, p. 102768, Oct. 2020.

\bibitem{Huang2022}
H.~Huang, H.~C. So, and A.~M. Zoubir, ``Off-grid direction-of-arrival
  estimation using second-order {Taylor} approximation,'' \emph{Signal
  Process.}, vol. 196, p. 108513, Jul. 2022.

\bibitem{Huang2016}
K.~Huang and N.~D. Sidiropoulos, ``Consensus-{ADMM} for general quadratically
  constrained quadratic programming,'' \emph{IEEE Trans. Signal Process.},
  vol.~64, no.~20, pp. 5297--5310, Oct. 2016.

\bibitem{Mehanna2015}
O.~Mehanna, K.~Huang, B.~Gopalakrishnan, A.~Konar, and N.~D. Sidiropoulos,
  ``Feasible point pursuit and successive approximation of non-convex
  {QCQPs},'' \emph{IEEE Signal Process. Lett.}, vol.~22, no.~7, pp. 804--808,
  Jul. 2015.

\end{thebibliography}

\end{document}